\documentclass{article}
\usepackage{graphicx}

\usepackage{float}% Required for inserting images
\usepackage{tikz}
\usepackage{subcaption}
\usepackage{graphicx}
\usepackage{amsmath,amssymb,amsthm}
\usepackage{mathrsfs}
\usepackage{pgfplots}
\theoremstyle{plain}
\newtheorem{theorem}{Theorem}[section]

\theoremstyle{definition}

\theoremstyle{remark}

\usepackage[backend=biber,style=alphabetic]{biblatex}
\newcommand{\Z}{\mathbb{Z}}

\newcommand{\F}{\mathbb{F}}

\usepackage{hyperref}
\title{The Value of Duplicate Measurements When Decoding Second-Order Reed-Muller Codes}
\author{Ziang Gao, Robert Calderbank}
\date{August 2026}
\begin{document}

\maketitle
\textbf{Abstract:} We compare the performance of the RPA and CHIRRUP decoding algorithms for the
second-order Reed-Muller code $RM(2,m)$. The RPA algorithm is designed to recover a noisy quadratic form $Q_M$ associated with a binary skew-symmetric $(m+1)\times(m+1)$ matrix $M$, whereas the CHIRRUP algorithm is designed to recover a noisy $\mathbb Z_4$-valued quadratic form $T_P$ associated with a binary symmetric matrix $P$. We describe how the Gray map connects the evaluation vectors for $Q_M$ and $T_P$, inducing a correspondence between $M$ and $P$ that is not linear, but is rank-preserving in the sense that if $M$ has rank $m+1-2i$, then $P$ has rank $m+1-2i$ or $m-2i$.\\
\hfill\break
We analyze how the Euclidean distance between evaluation vectors is determined by the rank of the difference between the corresponding matrices $M$ or $P$. We introduce Delsarte-Goethals sets $DG(m,r)$,
$r=0,1,\ldots,\frac{m-1}{2}$, which are sets of binary symmetric $m\times m$ matrices with the property that the difference of two distinct matrices in $DG(m,r)$ has rank at least $m-2r$. We use these sets to study how the performance of RPA and CHIRRUP depends on the availability of duplicate measurements, as determined by the rank and algebraic structure of the matrices in an ensemble.\\
\hfill\break
We demonstrate through simulations that when restricted to Delsarte-Goethals code ensembles, CHIRRUP makes more effective use of repeated measurement than does RPA. Ye and Abbe observed that RPA performs close to maximum-likelihood decoding on several Reed-Muller codes. We show
that RPA performance on singular quadratic forms can be improved further by a kernel-aware variant that aggregates projection directions in the same coset of the radical
\(\ker M\), thereby pooling repeated measurements of the same linear component. The two versions agree at full rank, while the kernel-aware decoder improves sharply as the nullity increases. When \(r<\frac{m-1}{2}\), however, CHIRRUP performs better on \(DG(m,r)\). In particular, a matrix in \(DG(m,0)\) is determined by any one of its rows,
and CHIRRUP's tree search combines multiple row estimates more effectively. Essentially, CHIRRUP exploits multiple measurements on $DG(m,0)$ more effectively than vanilla RPA.\\
\hfill\break
{\small\noindent\textbf{Index Terms:}
Reed-Muller codes, AWGN channels, RPA decoding, CHIRRUP decoding,
Delsarte-Goethals sets.\newpage}
\section{Introduction}
In 1942, before the field of coding theory was created, the statistician Fisher \cite{Fi42} discovered the first-order Reed-Muller code $RM(1,m)$ in the course of his work on factorial designs (for more details, see \cite{Ca98}). Codewords in the first-order Reed-Muller code $RM(1,m)$ are simply the rows of the $2^m \times 2^m$ Walsh-Hadamard matrix and their negatives, and decoding can be accomplished very efficiently using the fast Walsh-Hadamard transform.\\
\hfill\break
The discovery of the Reed-Muller (RM) hierarchy by Muller \cite{muller1954application} and Reed \cite{reed1954class} came in 1954. A codeword in the binary $RM$ code $RM(r,m)$ is the evaluation vector of a polynomial $f$ of degree at most $r$ in $m$ binary variables $x_1,x_2,...,x_m$. Reed gave the first decoding algorithm based on majority logic and the recursive structure of Boolean polynomials. Subsequently, Dumer developed recursive list-decoding algorithms that take advantage of the recursive structure of Boolean polynomials to systematically reduce the decoding problem to smaller RM codes (see \cite{dumer2004recursive}, \cite{dumer2006soft}, \cite{dumer2006recursiveLists}). The RPA and CHIRRUP algorithms studied in this paper do apply to higher-order $RM$ codes, but in this paper we restrict our attention to second-order $RM$ codes, in part because their performance on $RM(2,m)$ is superior to their performance on higher-order $RM$ codes.\\
\hfill\break
We consider the problem of recovering the evaluation vector of a noisy quadratic form $Q_M$ associated with a binary skew-symmetric $(m+1)\times(m+1)$ matrix $M$. The defining property of $Q_M$ is that 
\begin{equation}
    Q_M(x+b) = Q_M(x) + Q_M(b) + x^TMb.
\end{equation}
The vector $b$ defines the projection, the vector $d_b = Mb$ defines the direction of the projection, and $x^TMb + Q_M(b)$ is a codeword in the first-order $RM$ code. The Walsh-Hadamard transform can be used to identify $d_b$. Sakkour \cite{sakkour2005decoding} observed that the correspondence between $b$ and $d_b$ is linear, so that given any vector $b' \neq b$, the property $d_b=d_{b'}+d_{b+b'}$ provides an estimate of $d_b$. Sakkour selects $d_b$ by majority vote. The aggregation step in the RPA algorithm \cite{ye2020rpa} is different, using the projections to estimate each indexed coordinate of the evaluation vector.\\
\hfill\break
Section II connects the problem of recovering a noisy quadratic form $Q_M$ associated with a skew-symmetric $(m+1)\times(m+1)$ matrix $M$ with the problem of recovering a noisy $\Z_4$-linear quadratic form $T_P$ associated with a binary symmetric matrix $P$. A $\Z_4$-valued quadratic form $T_P$ is defined to be a map $T_P: \Z_2^m \to \Z_4$ that satisfies
\begin{equation}
    T_P(x+y) = T_P(x)+T_P(y)+2x^TPy,\quad  \textit{for all $x,y\in\Z_2^m$},
\end{equation}
where we view the entries $0$ and $1$ of the binary symmetric matrix $P$ as elements of $\Z_4$. The vector $b$ defines the projection, and $2x^TPb$ is a codeword in the first-order $RM$ code. The Gray map connects the evaluation vectors for $Q_M$ and $T_P$, inducing a correspondence between $M$ and $P$ that is not linear, but is rank-preserving in the sense that if $M$ has rank $m+1-2i$ then $P$ has rank $m+1-2i$ or $m-2i$.\\
\hfill\break
Section III analyzes how the Euclidean distance between evaluation vectors is determined by the rank of the difference between the corresponding matrices $M$ and $P$. Section IV describes the RPA and CHIRRUP decoding algorithms.\\
\hfill\break
Section~V presents numerical experiments involving the Delsarte-Goethals sets
\[
DG(m,r), \qquad r=0,1,\ldots,\frac{m-1}{2},
\]
which are sets of binary symmetric \(m\times m\) matrices such that the
difference of any two distinct matrices has rank at least $m-2r$. These sets provide a hierarchy of decoding problems with different rank distributions and geometry. We use this
hierarchy to compare how RPA and CHIRRUP exploit the structure of the underlying code ensemble.\\
\hfill\break
Both decoders are restricted to the same Delsarte-Goethals set in the aggregation stage and benefit from the same code geometry. However, our simulations show that CHIRRUP was able to use repeated measurements effectively. To better understand the gap between RPA and maximum-likelihood decoding, we introduce a kernel-aware variant of RPA. Given \(\ker M\), this decoder groups projection directions lying in the same coset of \(\ker M\), and therefore pools repeated measurements of the same linear component. The vanilla and kernel-aware decoders agree when \(M\) has full rank, while the kernel-aware variant improves sharply as the nullity increases. This indicates that vanilla RPA does not always fully exploit the repetition created by a
nontrivial kernel in the lower-level Delsarte-Goethals codes.\\
\hfill\break
When $r<\frac{m-1}{2}$, our simulations at $m=7$ show that DG-restricted CHIRRUP outperforms DG-restricted RPA. CHIRRUP reconstructs a common matrix from several autocorrelation measurements, and a matrix in $DG(m,r)$ is determined by any $r+1$ rows. Its tree search can therefore explore multiple row measurements that constrain the same matrix. Across the code ensembles, there is less variation among the RPA performance curves than the CHIRRUP curves. CHIRRUP and RPA assemble evidence that identifies the matrix $M$ in the same way, but CHIRRUP makes more effective use of this evidence at the aggregation stage. \newpage
\section{The Gray Map}
This section fixes the notation for the two representations of second-order Reed-Muller codewords used throughout the paper.  The RPA decoder is naturally described using binary evaluation vectors, while CHIRRUP is naturally described using complex phase vectors arising from $\mathbb Z_4$-valued quadratic forms.  The Gray map is the bridge between these two representations.  It sends a length \(2^m\) quaternary word to a length \(2^{m+1}\) binary word, so that the complex representation on \(\mathbb Z_2^m\) corresponds to the binary Reed-Muller representation on \(\mathbb Z_2^{m+1}\).\\
\hfill\break
The Gray map, shown below, assigns two information bits to the four possible phases in Quadrature Phase-Shift Keying (QPSK) in such a way that the labels of adjacent phases differ by only one binary digit. If the phases are subject to additive white Gaussian noise (AWGN), then the error most likely to occur causes only a single erroneously decoded information bit.\\

\begin{figure}[h]
\centering
\begin{tikzpicture}[scale=0.8]
    % Circle radius
    \def\r{1.5}

    % Draw circle
    \draw[thick] (0,0) circle (\r);

    % Constellation points
    \filldraw[black] (0:\r) circle (2pt);
    \filldraw[black] (90:\r) circle (2pt);
    \filldraw[black] (180:\r) circle (2pt);
    \filldraw[black] (270:\r) circle (2pt);

    % Top point: i, label 1/01
    \node[above=3pt] at (90:\r) {$1/01$};
    \node[below=5pt] at (90:\r) {$i$};

    % Right point: 1, label 0/00
    \node[left=5pt] at (0:\r) {$1$};
    \node[right=5pt] at (0:\r) {$0/00$};

    % Left point: -1, label 2/11
    \node[left=5pt] at (180:\r) {$2/11$};
    \node[right=5pt] at (180:\r) {$-1$};

    % Bottom point: -i, label 3/10
    \node[above=5pt] at (270:\r) {$-i$};
    \node[below=3pt] at (270:\r) {$3/10$};
\end{tikzpicture}
\caption{Gray labeling of QPSK phases.}
\label{fig:qpsk_gray_circle}
\end{figure}

\noindent Distance around the circle defines the Lee metric on $\mathbb{Z}_4$ and the Gray map defines an isometry between $(\mathbb{Z}_4^N, \text{ Lee distance})$ and $(\mathbb{Z}_2^{2N}, \text{ Hamming distance})$ \cite{hammons1994z4}.\\
\hfill\break
We use the Gray map to pass between binary codes and codes defined over the alphabet $\mathbb{Z}_4$. The 2-adic expansion of $c \in \mathbb{Z}_4$ defines two maps $\alpha$ and $\beta$ between $\mathbb{Z}_4$ and $\mathbb{Z}_2$ given by $c = \alpha(c) + 2\beta(c)$ and we extend these maps to vectors, entry by entry. The Gray map $\phi$ is the correspondence
\begin{equation}
\phi(c) = (\beta(c), \alpha(c) + \beta(c)), \quad \text{where } c \in \mathbb{Z}_4^N.
\end{equation}
\hfill\break
\textbf{The Binary Simplex Code $S_m$:} Every linear function $f: \mathbb{Z}_2^m \to \mathbb{Z}_2$ takes the form $f_w(v) = w^Tv$ for some $w \in \mathbb{Z}_2^{m}$, and the codewords in $S_m$ are obtained by evaluating the functions $f_w(v)$ at all vectors $v \in \mathbb{Z}_2^m$. The simplex code $S_m$ is a binary linear code with dimension $m$ and length $2^m$.\\
\hfill\break
\noindent{\bfseries The Binary Reed-Muller Code $RM(1,m)$:} A codeword in the Reed-Muller code $RM(1, m)$ is either a simplex codeword $(f_w(v))$, or the complement $(f_w(v)+1)$, of a simplex codeword. The Reed-Muller code $RM(1, m)$ is a binary linear code with dimension $m+1$ and length $2^m$ \cite{muller1954application,reed1954class}.\\
\hfill\break
\textbf{The Gray Image of the Binary Reed-Muller Code $RM(1, m+1)$:} The binary Reed-Muller hierarchy is defined by the $(u \mid u + v)$ construction  \cite{macwilliams1977theory} and $RM(1, m+1)$ is defined recursively by
\begin{equation}
    RM(1, m+1) = \{ (u \mid u + a\mathbf{1}) \mid u \in RM(1, m), a\in\Z_2\},
\end{equation}
where $\mathbf{1}$ is the binary vector with every entry equal to $1$.\\
The pre-image of $RM(1, m+1)$ under the Gray map is the set
\begin{equation}
   \{ 2w + \lambda\mathbf{1} \mid w \in S_m \text{ and } \lambda \in \mathbb{Z}_4 \},
\end{equation}
where $\mathbf{1}$ is the vector with every entry equal to $1$ \cite[Theorem~8.3]{calderbank1997z4}. In this example, $w$ is a binary vector, but we treat the entries as elements of $\mathbb{Z}_4$, and it is clear from the context that arithmetic takes place in $\mathbb{Z}_4$. We abuse notation in the same way throughout this paper.\\
\hfill\break
\textbf{Binary Quadratic Forms:} If $Q_M: \mathbb{Z}_2^{m+1} \to \mathbb{Z}_2$ is a quadratic form for which the associated bilinear form is $x^T M y$, then by definition
\begin{equation}
Q_M(x + y) = Q_M(x) + Q_M(y) + x^T M y, \quad \text{for all } x, y \in \mathbb{Z}_2^{m+1}.
\end{equation}
Note that the matrix $M$ is skew symmetric, that is, the main diagonal $d(M)=0$. If two quadratic forms are associated with the same bilinear form, then they differ by a linear map $f_w: \mathbb{Z}_2^{m+1} \to \mathbb{Z}_2$. We can represent a quadratic form $Q_M$ by the evaluation vector $(Q_M(v))$ of all values taken on $\mathbb{Z}_2^{m+1}$. If two quadratic forms are associated with the same skew-symmetric matrix $M$ then the corresponding evaluation vectors lie in the same coset of $RM(1, m+1)$ \cite{calderbank1997z4}.\\
\hfill\break
Let $N_0(m+1,r)$ be the number of $(m+1)\times(m+1)$ binary skew-symmetric matrices with rank $r$. For an elementary proof that $r=2s$ is even and that
\begin{equation}
    N_0(m+1,2s)
=
\prod_{i=1}^{s}
\frac{2^{2i-2}}{2^{2i}-1}
\prod_{i=0}^{2s-1}
\left(2^{m+1-i}-1\right),
\quad
N_0(m+1,2s+1)=0.
\end{equation}
see MacWilliams \cite[Theorem 3]{macwilliams1969orthogonal}.\\
\hfill\break
\textbf{$\mathbb{Z}_4$-Valued Quadratic Forms:} Given an $m \times m$ binary symmetric matrix $P$, Brown \cite{brown1972kervaire} defines a $\mathbb{Z}_4$-valued quadratic form $T_P$ to be a map $T_P: \mathbb{Z}_2^m \to \mathbb{Z}_4$ that satisfies
\begin{equation}
T_P(x + y) = T_P(x) + T_P(y) + 2 x^T P y, \quad \text{for all } x, y \in \mathbb{Z}_2^m,
\end{equation}
where we view the entries $0$ and $1$ of the binary symmetric matrix $P$ as elements of $\mathbb{Z}_4$. Setting $y=x$, we obtain $2T_P(x) = 2x^T P x \pmod 4$. In particular, the representative $T_P(x)=x^T P x$ is a $\mathbb{Z}_4$-valued quadratic form associated with the symmetric matrix $P$. If two $\mathbb{Z}_4$-valued quadratic forms are associated with the same symmetric matrix $P$, then they differ by a linear map $2f_w: \mathbb{Z}_2^m \to \mathbb{Z}_4$. We can represent a $\mathbb{Z}_4$-valued quadratic form $T_P$ by the evaluation vector $(T_P(v))$ of all values taken on $\mathbb{Z}_2^m$. If two $\mathbb{Z}_4$-valued quadratic forms are associated with the same symmetric matrix $P$ then the corresponding evaluation vectors lie in the same coset of the Gray image of $RM(1, m+1)$.\\
\hfill\break
Let $N(m,r)$ be the number of $m\times m$ binary symmetric matrices with rank $r$. For an elementary proof that 
\begin{equation}
    N(m,2s)
=
\prod_{i=1}^{s}
\frac{2^{2i}}{2^{2i}-1}
\cdot
\prod_{i=0}^{2s-1}
\left(2^{m-i}-1\right),
\qquad
2s \le m.
\end{equation}
\begin{equation}
    N(m,2s+1)
=
\prod_{i=1}^{s}
\frac{2^{2i}}{2^{2i}-1}
\cdot
\prod_{i=0}^{2s}
\left(2^{m-i}-1\right),
\qquad
2s+1 \le m
\end{equation}
see MacWilliams \cite[Theorem 2]{macwilliams1969orthogonal}.\\
\hfill\break
The Gray map defines a correspondence between quadratic forms and $\mathbb{Z}_4$-valued quadratic forms. Calderbank et al. \cite[Lemma~7.3]{calderbank1997z4} show this correspondence can be expressed as a correspondence between binary symmetric $m \times m$ matrices $P$ and binary skew-symmetric $(m+1) \times (m+1)$ matrices $M$ that is given by
\begin{equation}\label{eq:PtoM} 
  M =
  \begin{pmatrix}
    P + d(P)^{\!T}\,d(P) & d(P)^{\!T} \\[4pt]
    d(P) & 0
  \end{pmatrix},
\end{equation}
where $d(P)$ denotes the main diagonal of $P$. The rank of a skew-symmetric matrix is always even. The correspondence between $M$ and $P$ is not linear, but it is rank-preserving in the sense that if $M$ has rank $m+1-2i$ then $P$ has rank $m+1-2i$ or $m-2i$.\\
\hfill\break
\textbf{Example:} The Gray labeling of QPSK phases converts the evaluation vector of the $\mathbb{Z}_4$-valued quadratic form
\begin{equation}
T_P(x) = x^T P x, \quad \text{with } P=
\begin{bmatrix}
1 & 1 & 0 \\
1 & 0 & 1 \\
0 & 1 & 0
\end{bmatrix},
\end{equation}
to an 8-dimensional codeword in complex Euclidean space with entries $1$, $-1$, $i$, and $-i$. We identify this codeword with the $\mathbb{Z}_4$-valued quadratic form $T_P$. The corresponding binary skew-symmetric matrix is
\begin{equation}
M=
\begin{bmatrix}
0 & 1 & 0 & 1 \\
1 & 0 & 1 & 0 \\
0 & 1 & 0 & 0 \\
1 & 0 & 0 & 0
\end{bmatrix},
\end{equation}
where the variables are ordered as $(x_1,x_2,x_3,u)$. This matrix represents the associated alternating bilinear form. To obtain a binary quadratic form associated with $M$, we choose the upper-triangular representative
\begin{equation}
Q_M(x_1,x_2,x_3,u)=x_1x_2+x_2x_3+x_1u.
\end{equation}
We view the evaluation vector of $Q_M$ as a codeword in $16$-dimensional real Euclidean space with entries $0$ and $1$, and we identify this codeword with the quadratic form $Q_M$.
For example, if \begin{equation}
    x = (x_1,x_2,x_3) = (1,1,0),
\end{equation}
then $T_P(x) = 3 \pmod{4}$, and the corresponding QPSK phase is $i^3 = -i$.\\
The Gray map associates $3$ with the binary pair $10$. We produce the binary pair $10$ by evaluating the quadratic form $Q_M$ twice, once with $u=0$ and once with $u=1$.
\begin{equation}
    Q_M(1,1,0,0) = 1 \pmod{2},\quad
    Q_M(1,1,0,1) = 0 \pmod{2}.
\end{equation}
The discussion above gives the map between the two representations used in this paper: binary quadratic forms in $RM(2,m+1)$ and $\Z_4$-valued quadratic forms on $\Z_2^m$. In the next section we restrict this space to structured families of quadratic forms. The Kerdock and Delsarte-Goethals constructions choose special collections of symmetric matrices whose pairwise differences have controlled rank. Through the Gray map correspondence, these families become structured subcodes of $RM(2,m+1)$. This rank structure is the main feature that will later distinguish the performance of RPA and CHIRRUP.\\
\hfill\break

\section{Kerdock Codes and Delsarte-Goethals Codes}
The RPA algorithm recovers a quadratic form from a noisy observation of its evaluation vector in real Euclidean space. The geometry of this decoding problem is determined by the set of pairwise Euclidean distances between evaluation vectors. Cameron and Seidel \cite{cameron1973quadratic} showed that the inner product $\langle a, b\rangle$ between two evaluation vectors satisfies
\begin{equation}
    |\langle a, b\rangle|^2 = 0 \text{ or } 2^{2(m+1)-k},
\end{equation}
where $k$ is the rank of the difference between the corresponding binary quadratic forms. Hence, the geometry of the problem of recovering a noisy binary quadratic form from a set $S$ is determined by the set of ranks of pairwise differences of quadratic forms taken from $S$. A lower bound on the rank of these pairwise differences translates to a lower bound on the separation of evaluation vectors in real Euclidean space.\\
\hfill\break
\textbf{Structured Subcodes of $RM(2,m+1)$:}
The full second-order Reed-Muller code contains evaluation vectors of all binary quadratic forms. We now introduce two structured subcodes, the Kerdock code and the Delsarte-Goethals codes, by restricting the matrices that define the
quadratic forms. These restrictions control the ranks of pairwise differences of quadratic forms, and therefore control the Euclidean distance between the corresponding evaluation vectors. We will analyze the behavior of these subcodes under projection after introducing the construction.\\
\hfill\break
\textbf{The Nonlinear Binary Kerdock Code:} 
For odd $m$, Kerdock \cite{kerdock1972lowrate} found a set of $2^{m}$ binary quadratic forms with the property that all pairwise differences are non-singular. It is not possible to find a larger set with this property since the first rows of the corresponding skew-symmetric matrices must be distinct. Each quadratic form determines a coset of $RM(1, m+1)$, and the Kerdock code is the union of these cosets. The Kerdock code has length $2^{m+1}$, it contains $2^{2m+2}$ codewords, and the minimum Hamming distance between codewords is $2^m - 2^{(m-1)/2}$. When we replace the entries $0$ and $1$ by $1$ and $-1$ respectively, and we view the resulting vectors in real Euclidean space, we obtain an extremal set of Euclidean lines (see \cite{calderbank1997z4} for details).\\
\hfill\break
The CHIRRUP algorithm recovers a $\Z_4$-linear quadratic form from a set of $\Z_4$-linear quadratic forms on $\Z_2^m$ by finding the closest evaluation vector in complex Euclidean space. Again, the geometry of the decoding problem is determined by the set of pairwise Euclidean distances between evaluation vectors. Given two different $\Z_4$-linear quadratic forms, let $Q$ be the difference between the corresponding symmetric matrices. Calderbank, Howard and Jafarpour \cite[Appendix A]{calderbank2010deterministic} showed that the inner product $S$ between the corresponding evaluation vectors satisfies $|S|^2 = 0$ or $2^{2m-k}$, where $k$ is the rank of $Q$. Again, a lower bound on the rank of pairwise differences translates to a lower bound on the separation of evaluation vectors in complex Euclidean space.\\
\hfill\break
\textbf{The $\mathbb{Z}_4$-Linear Kerdock Code:} This is the Gray image of the nonlinear binary Kerdock code, and Hammons et al. \cite{hammons1994z4} showed that its preimage is a linear code over $\mathbb{Z}_4$. The $2^m$ symmetric matrices corresponding to the constituent $\mathbb{Z}_4$-linear quadratic forms are drawn from an $m$-dimensional binary vector space $DG(m, 0)$, which is the first space in the Delsarte-Goethals hierarchy described below. The nonzero matrices in $DG(m, 0)$ are nonsingular, and the difference of any two distinct matrices in $DG(m,0)$ is also nonsingular. It is not possible to find a larger set with this property since every binary vector of length $m$ appears as the first row of some matrix in $DG(m, 0)$. When we replace the entries in $\mathbb{Z}_4$ with QPSK phases and we view the resulting vectors in complex Euclidean space, we obtain an extremal set of complex Euclidean lines (see \cite{calderbank1997z4} for details).\\
\hfill\break
\textbf{The Delsarte-Goethals Hierarchy:} Let $m$ be odd. The Delsarte-Goethals set $DG(m, r)$ is a binary vector space containing $2^{(r+1)m}$ binary symmetric matrices with the property that the difference of any two distinct matrices has rank at least $k=m-2r$ (see \cite{delsarte1975alternating}, \cite[Chapter 15]{macwilliams1977theory}, \cite{hammons1994z4}). The Delsarte--Goethals sets are nested 
\begin{equation}
    DG(m,0) \subset DG(m,1) \subset \cdots \subset DG\left(m,\frac{m-1}{2}\right).
\end{equation}
The first set $DG(m, 0)$ determines the $\mathbb{Z}_4$-linear Kerdock code, and the nonlinear binary Kerdock code via the Gray map. The last set $DG(m, (m-1)/2)$ is the set of all binary symmetric matrices which determine the second-order Reed-Muller code $RM(2, m+1)$. We now construct the sets $DG(m, r)$ using the trace map in finite fields.\\
\hfill\break
Let $N=2^m$ and let $\F_N$ be the finite field obtained by adjoining a root $\alpha$ of an irreducible polynomial of degree $m$ over the binary field $\Z_2$. The field $
F_N$ is an $m$-dimensional binary vector space with basis $1, \alpha, \alpha^2, \ldots, \alpha^{m-1}$, and we represent the field element $
x = \sum_{i=0}^{m-1} c_i \alpha^i$ by the binary vector $x = (c_0, \ldots, c_{m-1})^T$. The trace map $\operatorname{Tr}: F_N \to Z_2$ is given by
\begin{equation}
    \operatorname{Tr}(x) = x + x^2 + x^4 + \ldots + x^{N/2}.
\end{equation}
The automorphisms of the field $\F_N$ that fix the binary field $Z_2$ are generated by the squaring operator $F: x \to x^2$, which we view as a nonsingular $m \times m$ binary matrix. The trace is a linear map, $\operatorname{Tr}(xy)$ is a symmetric bilinear form, and there exists a nonsingular binary matrix $P$ for which $\operatorname{Tr}(xy) = x^T Py$.
For each $z \in \F_N$, let $A_z$ denote the $m \times m$ binary matrix
representing the $\mathbb{Z}_2$-linear map
\[
\mu_z : \F_N \longrightarrow \F_N,
\qquad
\mu_z(x)=xz.
\]
The Delsarte--Goethals set $DG(m,r)$ is the vector space of all binary
symmetric matrices of the form
\begin{equation}
Q_{\mathbf z}
=
A_{z_0}P+
\sum_{i=1}^{r}
\left(
A_{z_i}P(F^i)^{T}
+
F^iPA_{z_i}^{T}
\right),
\qquad
\mathbf z=(z_0,z_1,\ldots,z_r)\in\mathbb F_N^{r+1}.
\end{equation}

Suppose that $\mathbf z\neq 0$ and that $x\in\ker Q_{\mathbf z}$. Then,
for every $y\in\mathbb F_N$,
\begin{align}
0
&=
\operatorname{Tr}\left(
z_0xy+
\sum_{i=1}^{r}
z_i\left(x^{2^i}y+xy^{2^i}\right)
\right) \\
&=
\operatorname{Tr}\left(
y^{2^r}
\left[
z_0^{2^r}x^{2^r}
+
\sum_{i=1}^{r}
\left(
z_i^{2^r}x^{2^{r+i}}
+
z_i^{2^{r-i}}x^{2^{r-i}}
\right)
\right]
\right).
\end{align}
Since the map $y\mapsto y^{2^r}$ is a bijection of $\mathbb F_N$ and
the trace pairing is nondegenerate, $x$ is a root of the linearized
polynomial
\begin{equation}
z_0^{2^r}x^{2^r}
+
\sum_{i=1}^{r}
\left(
z_i^{2^r}x^{2^{r+i}}
+
z_i^{2^{r-i}}x^{2^{r-i}}
\right)
=0.
\end{equation}
This is a nonzero polynomial of degree at most $2^{2r}$, so it has at
most $2^{2r}$ roots. Therefore,
\[
\dim\ker Q_{\mathbf z}\le 2r,
\qquad
\operatorname{rank}(Q_{\mathbf z})\ge m-2r.
\]

\hfill\break
\textbf{First-Order Projections of Quadratic Forms: }When a quadratic form is projected in one direction, the resulting object is first order. This structure is the algebraic feature exploited by both RPA and CHIRRUP. The following theorem is proved in Appendix A.
\begin{theorem}\label{Projection}[Binary first-order projection]
Let $V=\Z_2^{m+1}$, and let $Q_M:V\to \Z_2$ be a binary quadratic form whose
associated bilinear form is $x^TMy$, where $M$ is a binary skew-symmetric
$(m+1)\times(m+1)$ matrix. For $b\in V$, define
\begin{equation}
D_bQ_M(x)=Q_M(x+b)+Q_M(x).
\end{equation}
Then
\begin{equation}
D_bQ_M(x)=x^TMb+Q_M(b).
\end{equation}
In particular, if $b\neq 0$, then $D_bQ_M$ is a first-order Reed-Muller word on
the quotient space $V/\langle b\rangle$.

If $\operatorname{rank}(M)=m+1-2r$, then as $b$ varies over $V$, the possible
linear parts of $D_bQ_M$ form the column space of $M$, and each linear part
occurs exactly $2^{2r}$ times. Equivalently,
\begin{equation}
\Pr_{b\in V}[Mb=0]=2^{2r-(m+1)},
\end{equation}
where $b$ is chosen uniformly from $V$.
\end{theorem}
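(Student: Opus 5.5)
The identity for $D_bQ_M$ comes straight from the defining property of a binary quadratic form. Substituting $y=b$ in $Q_M(x+y)=Q_M(x)+Q_M(y)+x^TMy$ gives $Q_M(x+b)=Q_M(x)+Q_M(b)+x^TMb$. Adding $Q_M(x)$ to both sides over $\Z_2$ then yields $D_bQ_M(x)=x^TMb+Q_M(b)$. This is an affine function of $x$: its linear part is $x\mapsto x^T(Mb)$ and its constant is $Q_M(b)$.

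For the statement about the quotient, I will check that $D_bQ_M$ is constant on cosets $\{x,x+b\}$. This follows from $D_bQ_M(x+b)=D_bQ_M(x)+b^TMb$ together with $b^TMb=0$, which holds because $M$ is skew-symmetric (alternating, with zero diagonal). Equivalently, $D_bQ_M(x+b)=Q_M(x)+Q_M(x+2b)=Q_M(x)+Q_M(x+b)$ directly. Hence $D_bQ_M$ descends to an affine function on $V/\langle b\rangle\cong\Z_2^{m}$. The linear functional $x\mapsto x^TMb$ vanishes on $b$, so it is well defined on the quotient. The result is therefore a codeword of $RM(1,m)$ on the quotient, i.e.\ a first-order Reed--Muller word.

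For the counting statement, I view the linear part as the vector $Mb$. The map $b\mapsto Mb$ is $\Z_2$-linear from $V$ onto the column space of $M$, which has dimension $\operatorname{rank}(M)=m+1-2r$. By rank--nullity, $\dim\ker M=2r$. Each fiber of a linear map is a coset of the kernel, so every vector of the column space is hit by exactly $2^{2r}$ choices of $b$; these are the repeated measurements of the same linear component. Taking the fiber over $0$ and dividing by $|V|=2^{m+1}$ gives $\Pr_{b}[Mb=0]=2^{2r}/2^{m+1}=2^{2r-(m+1)}$.

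No step is a real obstacle. The only point needing care is the well-definedness on $V/\langle b\rangle$, which rests on the alternating property $b^TMb=0$. Over $\Z_2$ this requires the zero diagonal, not merely symmetry, and the paper notes that $d(M)=0$.
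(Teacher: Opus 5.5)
Your proposal is correct and follows the paper's proof in Appendix A essentially step for step. It substitutes $y=b$ in the defining identity, shows constancy on cosets of $\langle b\rangle$ via $b^TMb=0$, and applies rank--nullity so that each fiber of $b\mapsto Mb$ is a coset of $\ker M$ of size $2^{2r}$. (One small slip in your aside: it should read $D_bQ_M(x+b)=Q_M(x+2b)+Q_M(x+b)=Q_M(x)+Q_M(x+b)$. Once corrected, this direct check shows that well-definedness on the quotient holds automatically in characteristic $2$, without invoking the alternating property.)
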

If $b' = b+k$ for some $k\in\ker M$, then
\[
Mb'=M(b+k)=Mb.
\]

Thus, projection directions that differ by an element of $\ker M$ produce
the same linear component. When
$\operatorname{rank}(M)=m+1-2r$, each linear component is therefore
measured through $2^{2r}=|\ker M|$ projection directions. These duplicate
measurements provide additional information for the aggregation step in RPA,
although the corresponding affine projections may differ in their constant
terms. Section~IV describes how RPA and CHIRRUP use first-order
projections, and Section~V returns to the dependence of their performance
on rank.

\newpage
\section{RPA and CHIRRUP Decoding Algorithms}
The RPA and CHIRRUP algorithms both take advantage of the recursive structure of binary Reed-Muller codes. Every codeword of length $2^{m+1}$ in the $r$-th order Reed-Muller code $\mathrm{RM}(r,m+1)$ can be written as a concatenation $(u \mid u+v)$ where $u$ is a codeword in $RM(r,m)$ and $v$ is a codeword in $RM(r-1,m)$. RPA and CHIRRUP can be applied to Reed-Muller codes of any order and are particularly effective when applied to second-order Reed-Muller codes.\\
\hfill\break
\textbf{RPA Algorithm:} Suppose that $Q_M$ is a quadratic form on $\Z_2^{m+1}$ where the skew-symmetric matrix $M$ defines the associated bilinear form. A nonzero vector $b \in \Z_2^{m+1}$ defines a quotient space $Z_2^{m+1}/\langle b\rangle$. The cosets of $\langle b\rangle$ determine $2^m$ coordinate pairings $(x,x+b)$ as $x$ runs through the quotient space. We have divided the $2^{m+1}$ coordinates into two halves and paired the coordinate $x$ with the coordinate $x+b$. We use the defining property of a quadratic form to write the evaluation vector determined by $Q_M$ as
\begin{equation}
(Q_M(x),Q_M(x+b))=(Q_M(x),Q_M(x))+(0,x^T M b)+(0,Q_M(b)).
\end{equation}
Here all additions are in $\Z_2$. The evaluation vector $\bigl(x^\top M b+Q_M(b)\bigr)$ is a codeword in the first-order Reed-Muller code $RM(1,m)$. RPA selects different projection vectors $b$ in order to query the skew-symmetric matrix $M$ in different ways. RPA decoding involves the following three steps:\\
\hfill\break
\noindent\emph{\textbf{Projection:}} Given a noisy evaluation vector $(v(x))$ and a projection vector $b \in \mathbb{Z}_2^{m+1}$, form the vector $(v(x)+v(x+b))$. The analysis given above shows $(v(x)+v(x+b))$ is a first-order codeword in $RM(1,m)$ corrupted by noise.\\
\hfill\break
\noindent\emph{\textbf{Recursion:}} Decode $(v(x)+v(x+b))$ to a codeword in $RM(1,m)$ using the fast Walsh-Hadamard transform (FWHT). For Reed-Muller codes of order $r$, RPA decodes $(v(x)+v(x+b))$ recursively to a codeword in $RM(r-1,m)$.\\
\hfill\break
\noindent\emph{\textbf{Aggregation:}} For every coordinate of the initial noisy codeword, use a structured majority vote or soft-decision aggregation rule to fuse the information derived from the different projections.\\
\hfill\break
\noindent\textbf{Translation from Real to Complex Euclidean Space:} We start in real Euclidean space with an instance $(\mathbf{x}+\mathbf{n_1},\mathbf{y}+\mathbf{n_2})$ of the decoding problem, where $(\mathbf{x},\mathbf{y})$ is a codeword in $RM(2,m+1)$, and the vectors $\mathbf{n_1},\mathbf{n_2}$ represent additive white Gaussian noise (AWGN). In the codeword $(\mathbf{x},\mathbf{y})$ we have pairs of coordinates $(0,0)$, $(1,0)$, $(0,1)$ and $(1,1)$ with squared Euclidean distances $1$ and $2$. We apply the inverse of the Gray map to $(\mathbf{x},\mathbf{y})$ to obtain a codeword $\mathbf{z}$ in complex Euclidean space with entries $1,-1,i$ and $-i$. Since the squared Euclidean distances between entries are now $2$ and $4$, we normalize $\mathbf{z}$ by $\sqrt{2}$. We replace the real vector $(\mathbf{n_1},\mathbf{n_2})$ by the complex vector $\mathbf{n_1}+i\mathbf{n_2}$ in order to preserve the geometry of the decoding problem. This translates the problem of recovering a noisy quadratic form in real Euclidean space to that of recovering a noisy $\mathbb{Z}_4$-linear quadratic form in complex Euclidean space. Our method of translation preserves the geometry of the decoding problem.\\
\hfill\break
\noindent\textbf{CHIRRUP Algorithm: }Suppose that $T_P$ is a $\Z_4$-linear quadratic form on $\Z_2^m$ where the symmetric matrix $P$ defines the associated bilinear form. The $\Z_4$-linear quadratic form $T_P(x)+2d^Tx+f$ is represented by the evaluation vector 
\begin{equation}
\Psi_{P,d,f}(x)
= \frac{1}{\sqrt{2^m}} i^{\,T_P(x)+2d^Tx+f},
\qquad x\in \mathbb Z_2^m .
\end{equation}
We translate the entries by $b\in\Z_2^{m}$, and set
\begin{equation}
G_b(x)
=
\Psi_{P,d,f}(x+b)
\overline{\Psi_{P,d,f}(x)}
=
2^{-m}
i^{T_P(b)+2b^TPx+2d^Tb}.
\end{equation}
The factor $i^{T_P(b)+2d^Tb}$ is independent of $x$. After
removing this constant phase and rescaling, we obtain
\[
2^m i^{-T_P(b)-2d^Tb}G_b(x)
=
(-1)^{b^TPx}
=
(-1)^{(Pb)^Tx},
\]
where the last equality follows because $P$ is symmetric. Thus,
after removal of the constant phase, $G_b$ is a row of the
Walsh-Hadamard matrix of order $2^m$, scaled by $2^{-m}$.
CHIRRUP decoding involves the following three steps.\\
\hfill\break
\noindent\emph{\textbf{Projection:}} Given a noisy evaluation vector $v(x)$, and a vector $b \in Z_2^m$, form the vector $(v(x+b)\overline{v(x)})$. The analysis given above shows $(v(x+b)\overline{v(x)})$ is obtained by scaling a row of the Walsh-Hadamard matrix of order $2^m$ then adding Gaussian noise.\\
\hfill\break
\noindent\emph{\textbf{Recursion:}} Derive an estimate for $2b^T P x+2d^T b$ by decoding $(v(x)\overline{v(x+b)})$ to a codeword in $RM(1,m)$ using the Fast Hadamard Transform.\\
\hfill\break
\noindent\emph{\textbf{Aggregation:}} Every translate $b$ provides information about the rows of $P$. Use a structured tree search to aggregate the information derived from the different translations (MATLAB code is available to download from\\ \texttt{https://github.com/ajthompson42/CHIRRUP}).\\
\hfill\break
The Delsarte-Goethals set $DG(m,r)$ is an ensemble of $\Z_4$-linear quadratic forms with the property that $r+1$ rows of the matrix $P$ uniquely determine the form. This property simplifies the aggregation step in CHIRRUP.\\
\hfill\break
\noindent\textbf{Recovering Superpositions of Evaluation Vectors:}
After scaling by $2^{-m/2}$, the evaluation vectors corresponding to the same symmetric matrix $P$ determine an orthonormal basis. The $\Z_4$-linear Kerdock code determines a collection of $2^m$ orthonormal bases that are mutually unbiased, in that a vector in one orthonormal basis looks like Gaussian noise to any other basis. This is because, given two different symmetric matrices from the Kerdock set $DG(m,0)$, the inner product of the corresponding scaled evaluation vectors $u_P,u_Q$ satisfies $|\langle u_P,u_Q\rangle|=2^{-m/2}$. Calderbank and Thompson \cite{calderbank2020chirrup} have enhanced the chirp detection algorithm \cite{calderbank2010deterministic} described above to develop a peeling decoder that uses the geometry of orthonormal bases to recover a superposition of evaluation vectors. Their peeling decoder supports massive multiple access in wireless communications (for more details, see \cite{calderbank2020chirrup}). Essentially the same algorithm can be applied to binary codewords from $RM(2,m)$ after exponentiation of entries, that is, after replacing an entry $0$ by $1$ and an entry $1$ by $-1$.\\
\section{Numerical Experiments}
We fix $m=7$, so that codewords in $RM(2,m+1)$ have length $256$, and codewords in the corresponding $\Z_4$-linear code have length $128$. For $r=0,1,2,$ and $3$, we draw a random symmetric matrix $P$ from the Delsarte-Goethals set $DG(7,r)$, and we construct the corresponding skew-symmetric matrix $M$. We construct the evaluation vector $z_P$ of a random $\Z_4$-linear quadratic form $T_P$ corresponding to $P$, then apply the Gray map to obtain the evaluation vector $z_Q$ of the binary quadratic form $Q_M$ corresponding to $M$. We generate a vector $n_P$ of AWGN samples in complex Euclidean space $\mathbb{C}^{128}$, then use the method described in Section IV to generate the paired vector $n_Q$ of AWGN samples in real Euclidean space $\mathbb{R}^{256}$. We parameterize the channel directly by the inverse noise spectral density $1/N_0$. This normalization is independent of the rate of the particular Delsarte-Goethals subcode, so all code ensembles are compared at the same channel noise level. We present the instance $z_P+n_P$ to DG-restricted CHIRRUP and the instance $z_Q+n_Q$ to DG-restricted RPA.\\
\hfill\break
DG-restricted CHIRRUP and DG-restricted RPA use the prescribed Delsarte-Goethals set in the same role: the restriction constrains the reconstruction of the quadratic form, while the two decoders retain their original mechanisms for extracting evidence from the received signal. DG-restricted CHIRRUP reconstructs the quadratic form row by row, retaining at each stage the highest-scoring candidate that is consistent with the prescribed Delsarte-Goethals set. DG-restricted RPA first forms and decodes all 255 nonzero one-dimensional projections exactly as in standard RPA. Its restriction is introduced only in the aggregation stage. Each decoded projection provides a linear label determined by the transmitted quadratic form and the corresponding projection direction. These labels are aggregated to score row-by-row candidates that are consistent with the prescribed Delsarte-Goethals set. At each stage, only the highest-scoring candidate is retained. After the quadratic form has been reconstructed, both decoders recover the remaining affine component by a final Walsh--Hadamard transform. In this way, both decoders have been altered to intake Delsarte-Goethals code ensembles, while the first two stages remain unchanged.\\
\hfill\break\newpage
We calculate the error rate as a function of increasing $1/N_0$, and we follow \cite{ye2020rpa} in presenting results for word error rate (WER) rather than bit error rate (BER). For each of the resulting $44$ $(r,1/N_0)$ cells, our WER is obtained by averaging over $10,000$ paired trials.\\
\hfill\break
The geometry of the CHIRRUP and RPA decoding problems for $DG(m,r)$ is the set of pairwise distances between evaluation vectors. As $r$ increases, the minimum squared distance between evaluation vectors decreases. Fig. 2 shows that both decoders generally improve as $r$ decreases and pairs of evaluation vectors move farther apart. Thus, geometry benefits both decoders, but it does not by itself explain the much stronger separation across r observed for restricted-CHIRRUP than restricted-RPA.
\begin{figure}[H]
    \centering
    \captionsetup{font=small,skip=2pt}

    \includegraphics[width=0.60\textwidth]
    {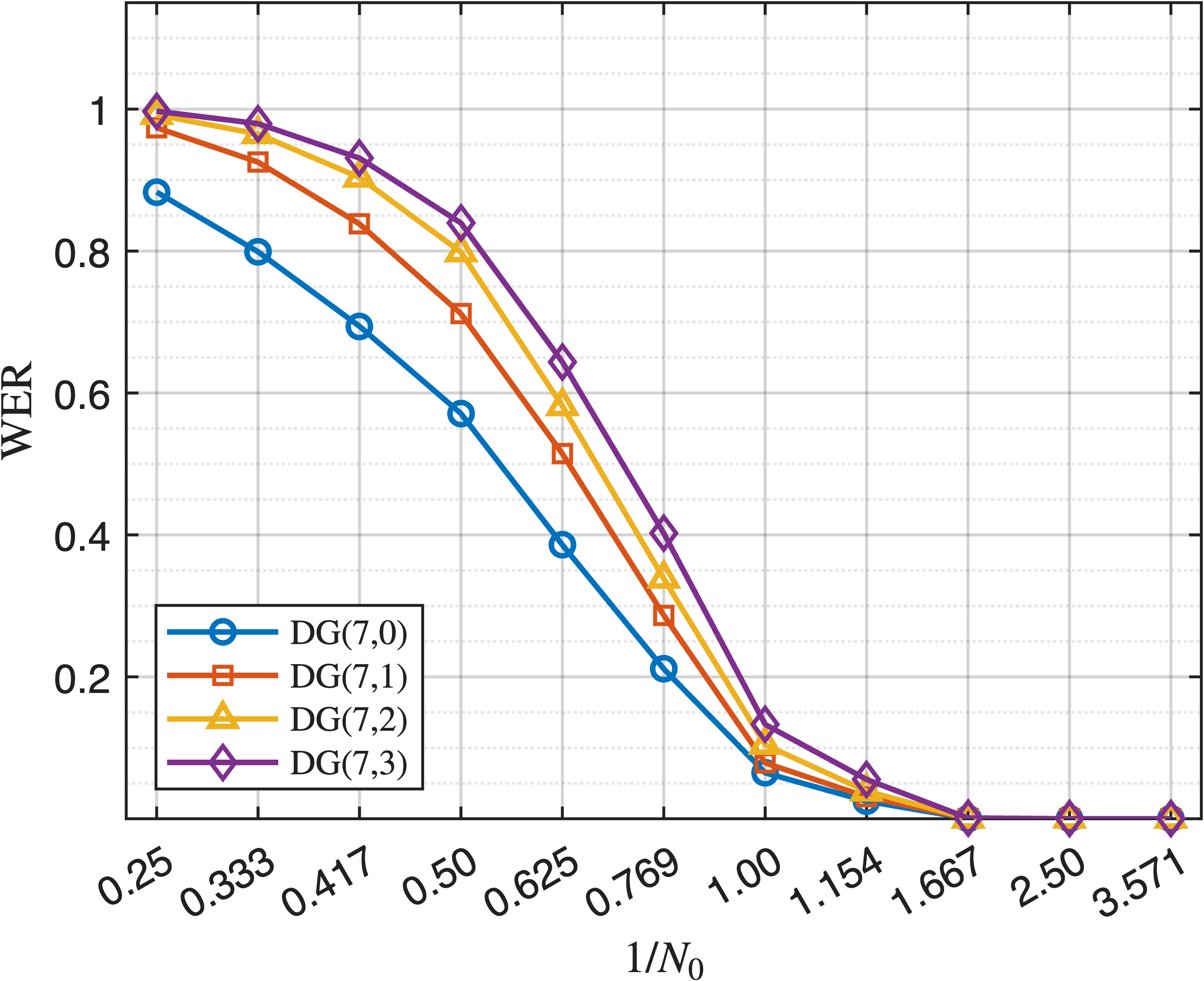}

    \vspace{-0.10cm}
    {\footnotesize Fig.~\ref{fig:method-only}a. DG-Restricted CHIRRUP}

    \vspace{0.10cm}

    \includegraphics[width=0.60\textwidth]
    {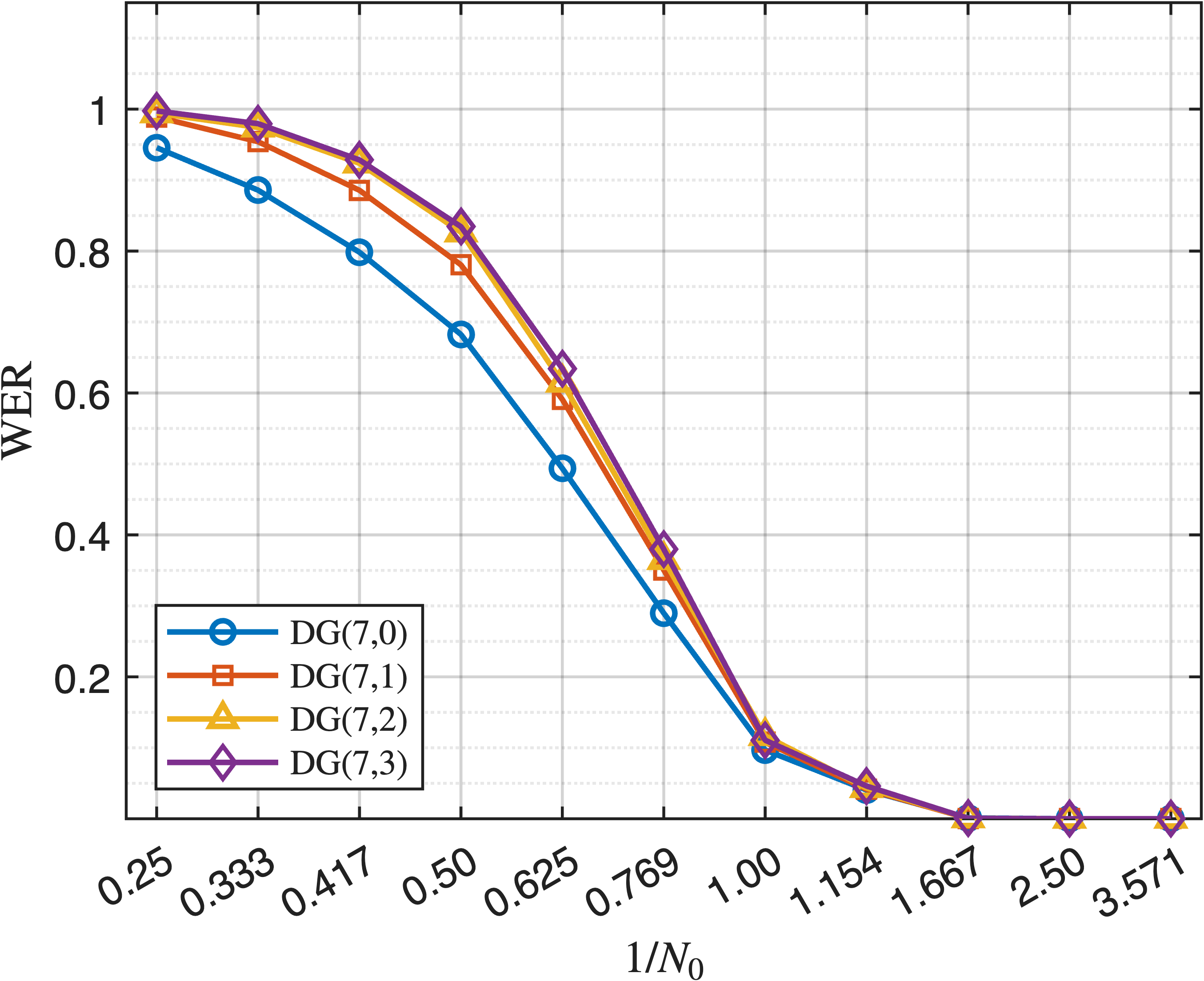}

    \vspace{-0.10cm}
    {\footnotesize Fig.~\ref{fig:method-only}b. DG-Restricted RPA}

    \vspace{0.05cm}

    \caption{DG-Restricted Decoders performance for $DG(m,r)$ as a function of increasing SNR.}
    \label{fig:method-only}
\end{figure}
Fig.~\ref{fig:paired-comparison} compares DG-restricted RPA and DG-restricted CHIRRUP performance on
$DG(7,r)$ for $r=0,1,2,$ and $3$.
\begin{figure}[H]
    \centering
    \begin{minipage}[t]{0.35\textwidth}
        \centering
        \includegraphics[width=\linewidth]{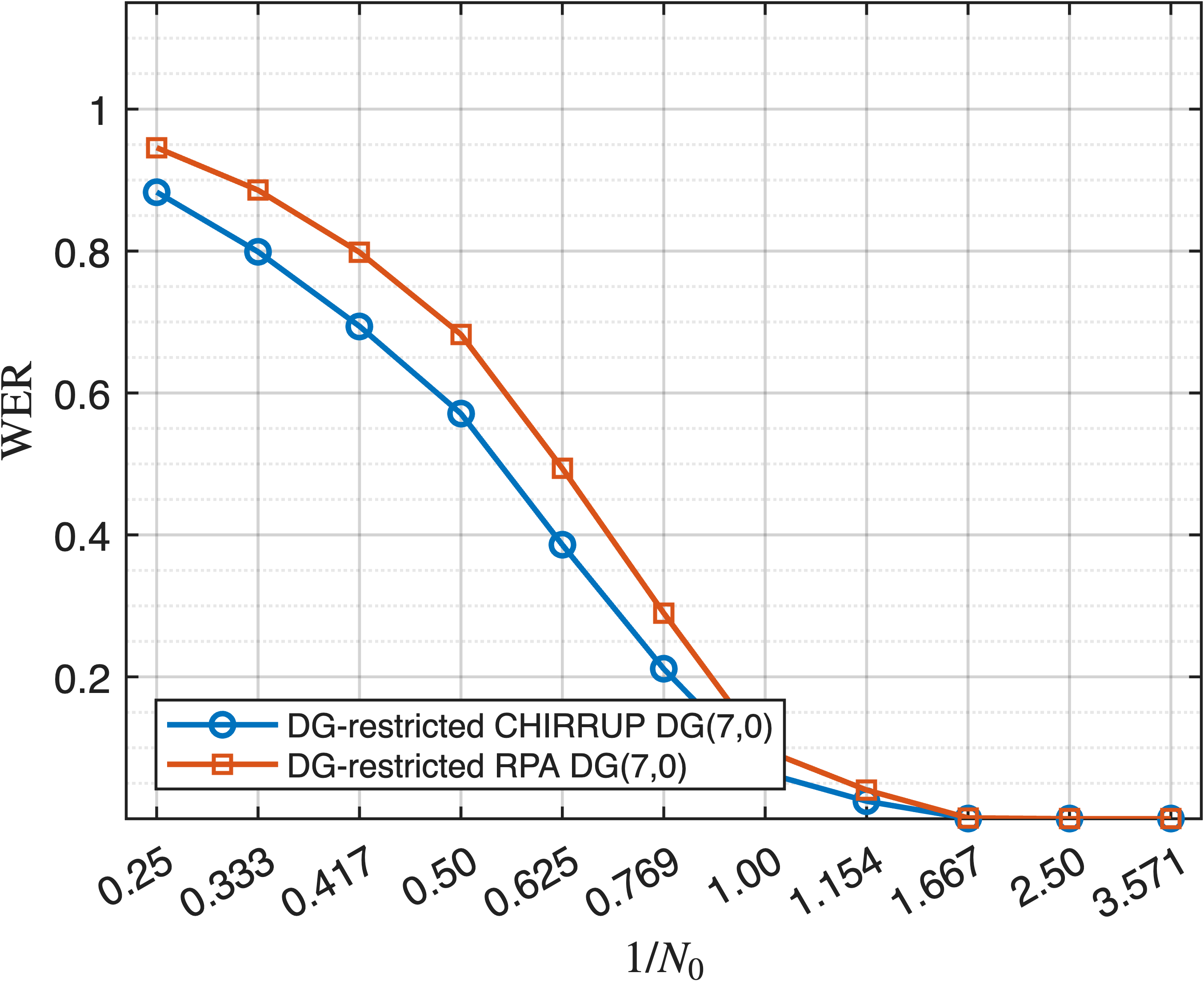}
        \caption*{\footnotesize Fig.~\ref{fig:paired-comparison}a. $DG(7,0)$}
    \end{minipage}
    \hspace{0.02\textwidth}
    \begin{minipage}[t]{0.35\textwidth}
        \centering
        \includegraphics[width=\linewidth]{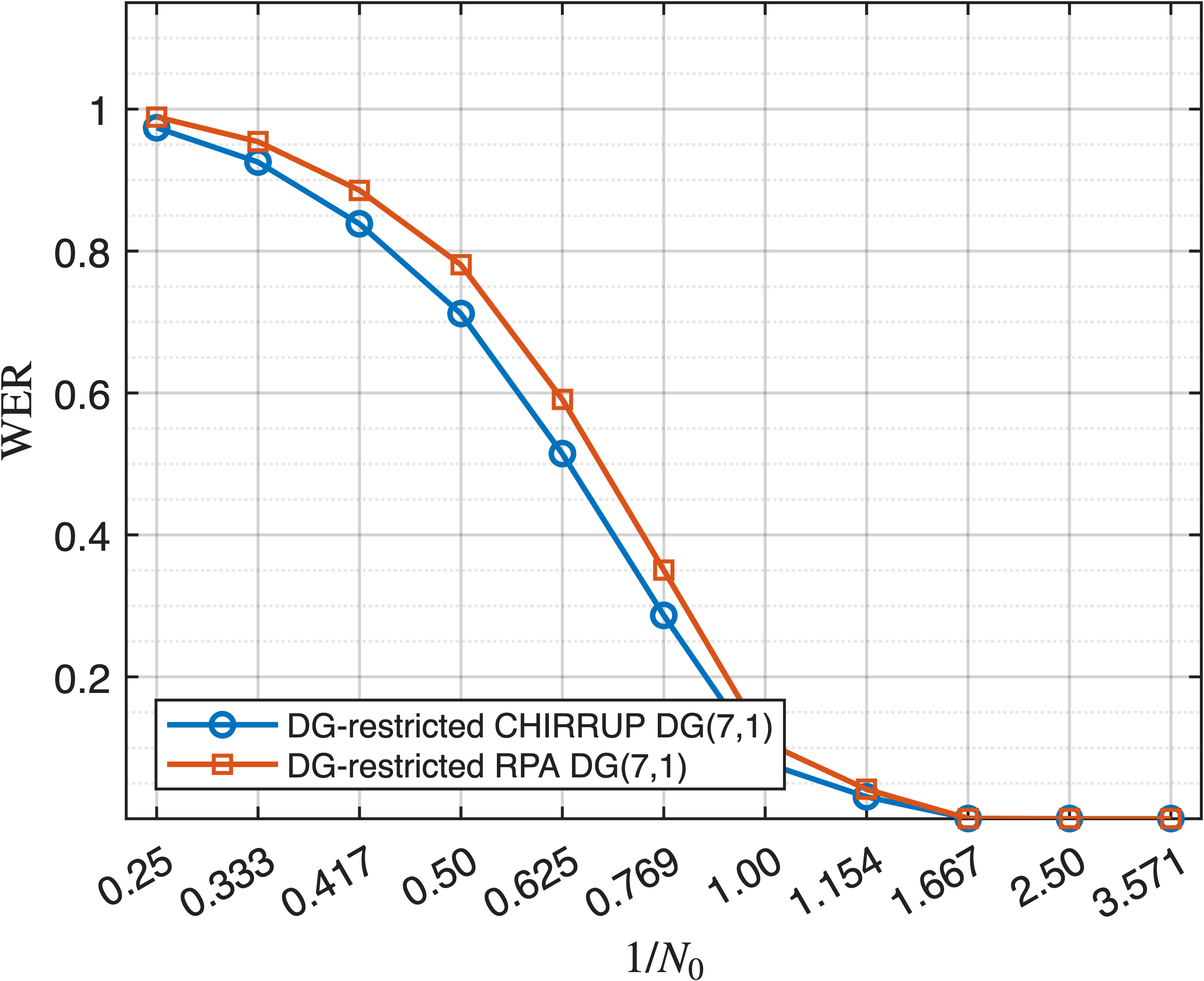}
        \caption*{\footnotesize Fig.~\ref{fig:paired-comparison}b. $DG(7,1)$}
    \end{minipage}

    \vspace{0.15cm}

    \begin{minipage}[t]{0.35\textwidth}
        \centering
        \includegraphics[width=\linewidth]{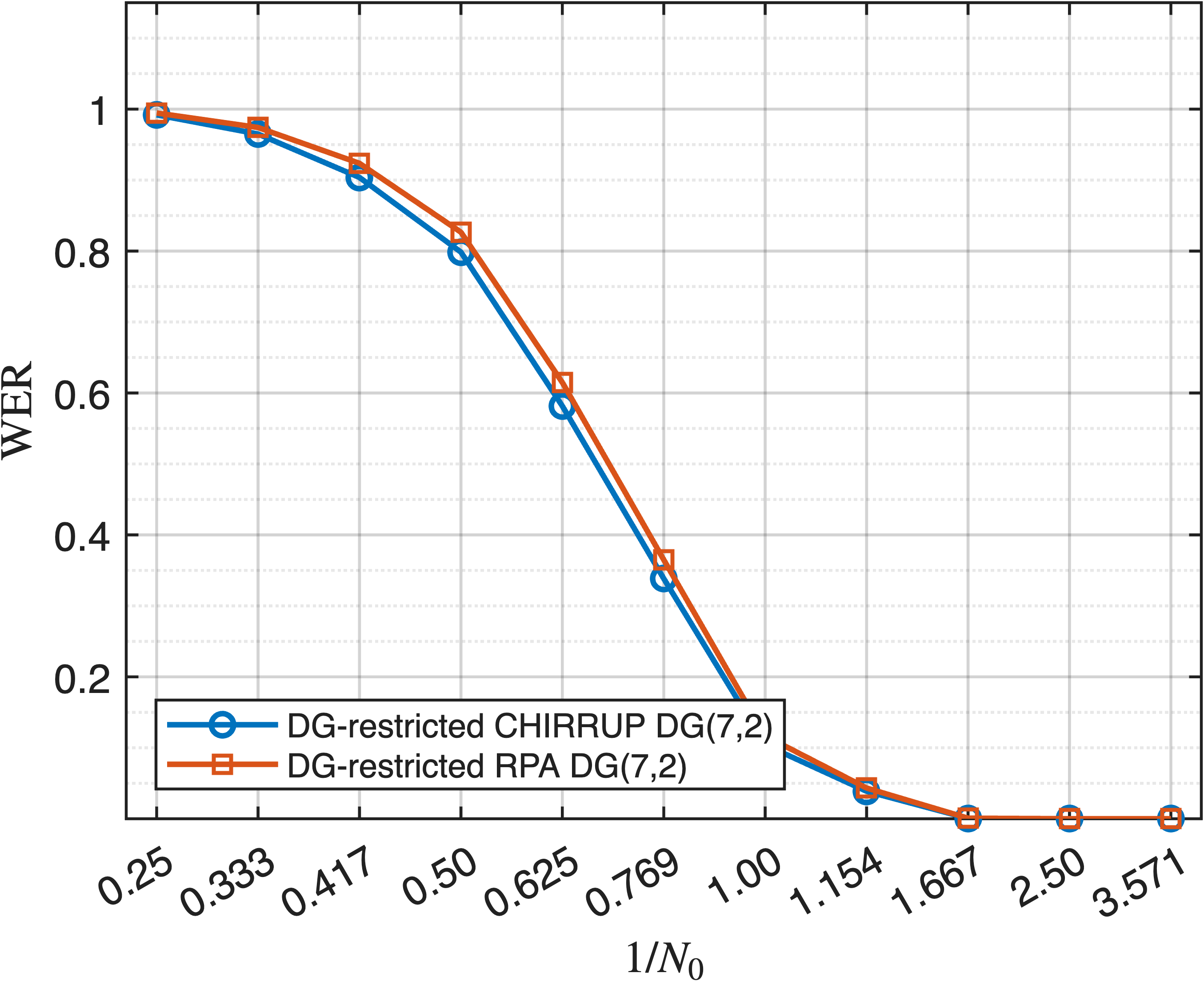}
        \caption*{\footnotesize Fig.~\ref{fig:paired-comparison}c. $DG(7,2)$}
    \end{minipage}
    \hspace{0.02\textwidth}
    \begin{minipage}[t]{0.35\textwidth}
        \centering
        \includegraphics[width=\linewidth]{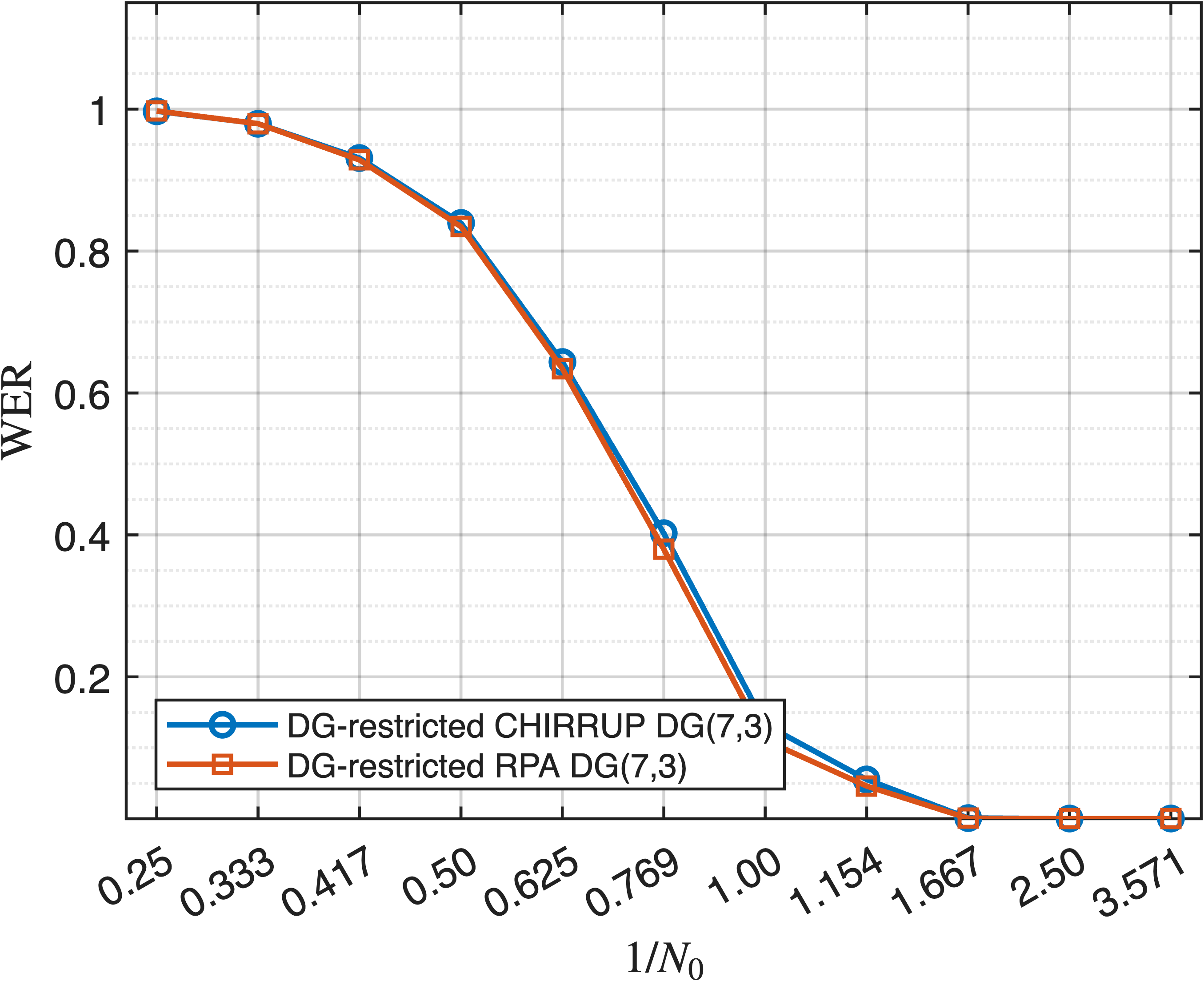}
        \caption*{\footnotesize Fig.~\ref{fig:paired-comparison}d. $DG(7,3)$}
    \end{minipage}

    \caption{Comparison of RPA and CHIRRUP performance as a function of increasing SNR.}
    \label{fig:paired-comparison}
\end{figure}
Figure~4 examines the original decoder failures on \(DG(7,3)\). Conditional on a
word error, we record
\(\operatorname{rank}(P-\widehat P)\) for CHIRRUP and
\(\operatorname{rank}(M-\widehat M)\) for RPA. The latter rank is
necessarily even because \(M-\widehat M\) is alternating.
\hfill\break
\begin{figure}[H]
    \centering

    \begin{minipage}[t]{0.40\textwidth}
        \centering
        \includegraphics[width=\linewidth]
        {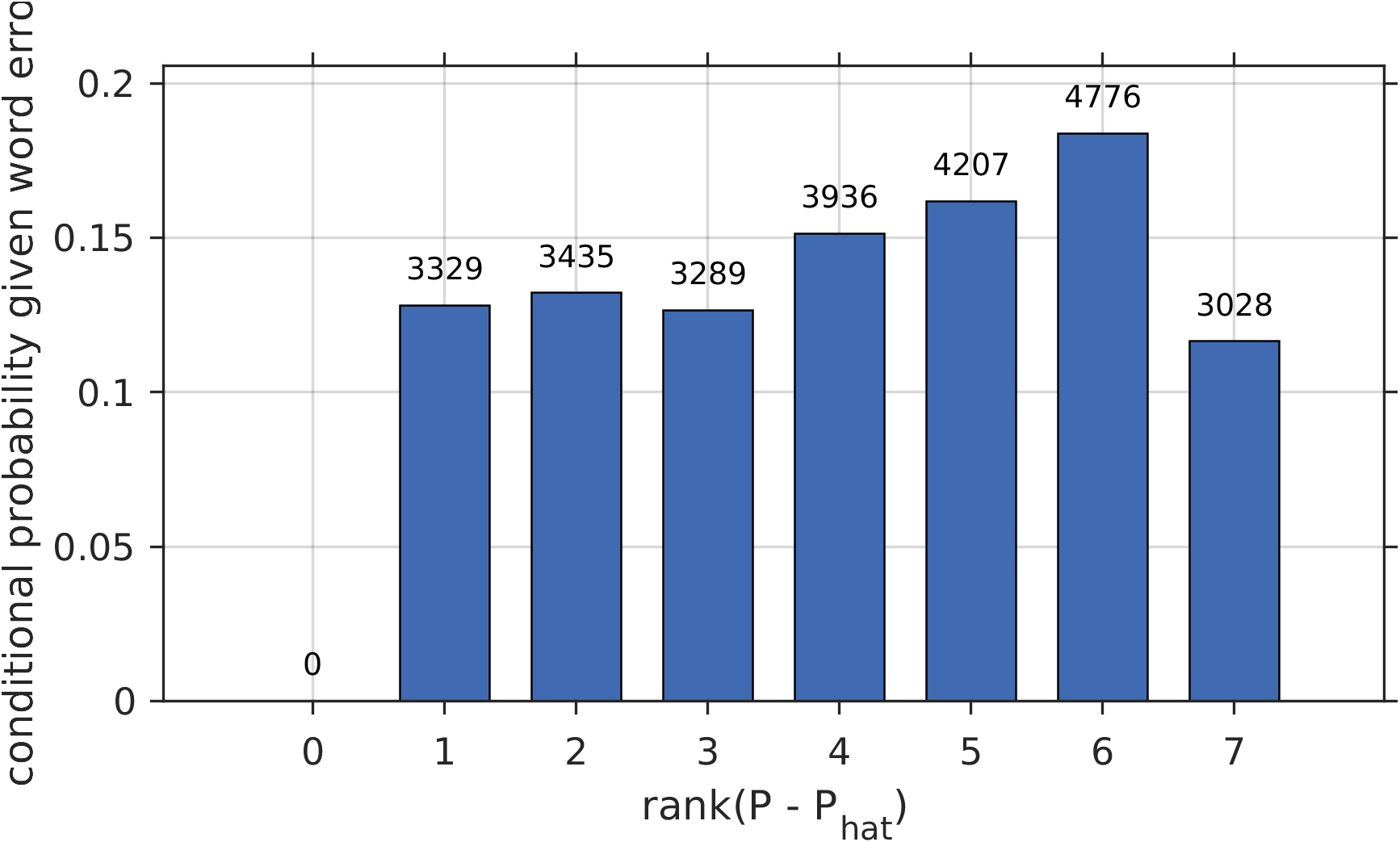}
        \caption*{Fig.~\ref{fig:rank-errors}a. CHIRRUP}
    \end{minipage}
    \hspace{0.04\textwidth}
    \begin{minipage}[t]{0.40\textwidth}
        \centering
        \includegraphics[width=\linewidth]
        {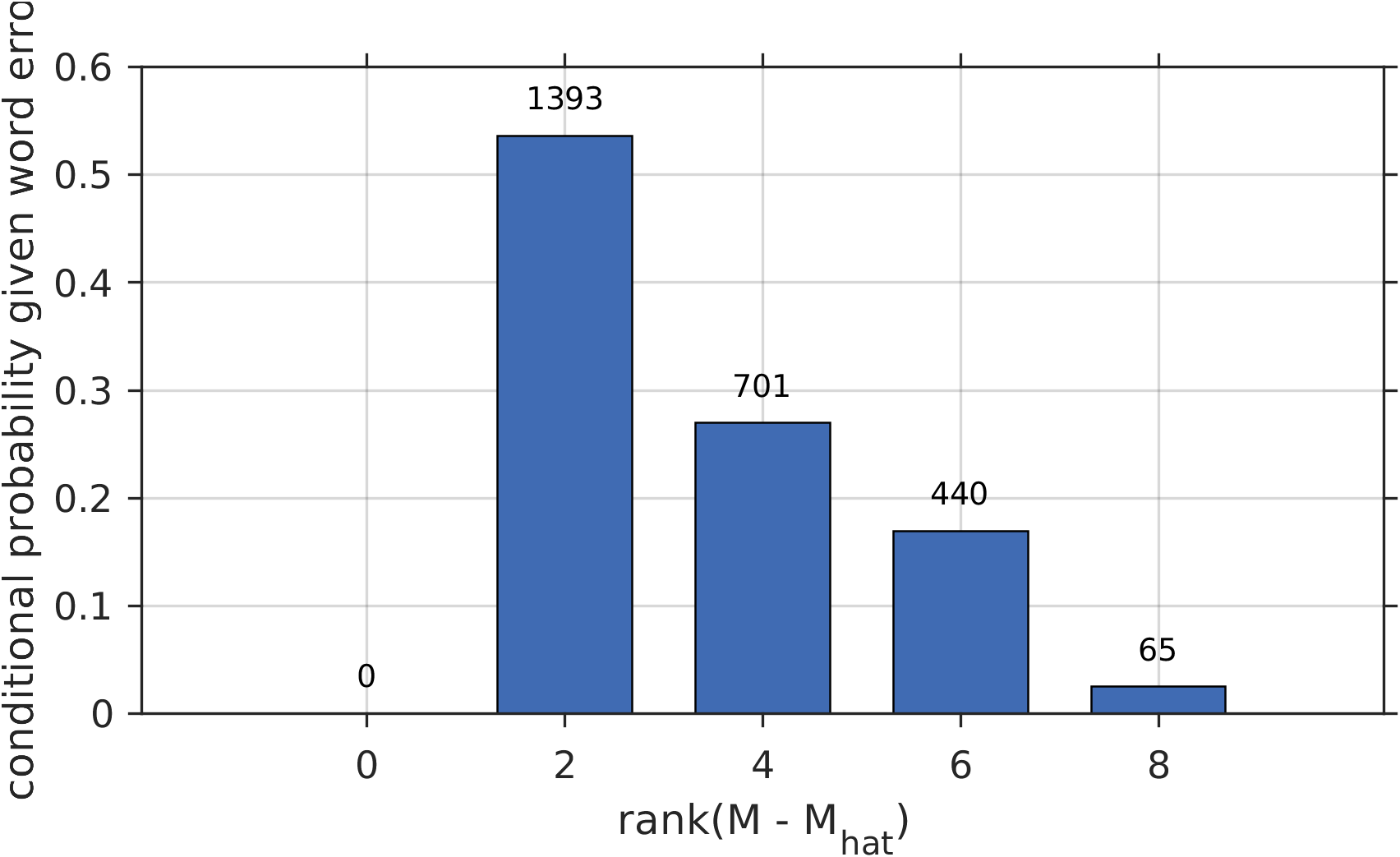}
        \caption*{Fig.~\ref{fig:rank-errors}b. RPA}
    \end{minipage}

    \caption{Conditional distribution of the rank difference between the
    transmitted and incorrectly decoded quadratic forms.}
    \label{fig:rank-errors}
\end{figure}

Figure 4 shows that CHIRRUP and RPA fail in different ways. The CHIRRUP errors are distributed across all nonzero ranks of
$P-\widehat{P}$, while the RPA errors are concentrated at low even ranks of
$M-\widehat{M}$. The rank-enumeration formulas in Section~II show that high-rank matrix differences are much more numerous than low-rank differences.
Nevertheless, low ranks account for a substantial fraction of the observed errors, particularly for RPA. Thus, CHIRRUP errors occur across a broad range of ranks, whereas RPA failures are disproportionately associated with low-rank competitor forms.\\
\hfill\break
The maximum-likelihood (ML) decoder selects the most probable codeword from the ensemble of possible codewords. Since the noise is Gaussian, the ML
decoder selects the codeword that is closest to the received vector. The geometry of the CHIRRUP and
RPA decoding problems is least favorable for
\(DG\bigl(m,\frac{m-1}{2}\bigr)\), because this ensemble contains pairs of
quadratic forms whose difference has rank \(2\). Fig.~2 raises the question of why 
under the same aggregation constructions, CHIRRUP, when restricted on DG code ensembles, performs better than RPA. We show that performance also depends on the availability of multiple measurements, and CHIRRUP is able to gain more information from repeated measurements.\\
\hfill\break
Suppose that the transmitted word is determined by \(M\), and let
\(\widehat{M}\) denote a competing matrix. Write
\[
\Delta M=M-\widehat{M}.
\]
For a projection direction \(b\), RPA obtains information about \(M\) through
the product \(Mb\), while the information distinguishing \(M\) from
\(\widehat{M}\) is contained in \(\Delta M b\). If
\[
\operatorname{rank}(\Delta M)=m+1-2r,
\]
then Theorem~III.1 implies that the map
\[
b\longmapsto \Delta M b
\]
has a kernel of size \(2^{2r}\). Consequently, as \(b\) runs through the
possible directions, each element of the column space of \(\Delta M\) appears
\(2^{2r}\) times as a product \(\Delta M b\).\\
Thus, a low-rank difference also provides RPA with more repeated
measurements of each distinct projected difference. These repeated projected
differences can reinforce the same coordinate estimates during aggregation;
ordinary RPA uses this redundancy only indirectly and does not explicitly group
kernel-equivalent directions. This helps explain both the concentration of RPA
errors at low ranks in Fig.~4 and the strong performance of RPA on the larger
Delsarte--Goethals ensembles. By contrast, in $DG(m,0)$, every nonzero
difference has full rank, so the map
\[
b\longmapsto \Delta M b
\]
is one-to-one and provides no repeated evidence. At the same time,
a matrix in $DG(m,0)$ is determined by any one of its rows, while CHIRRUP's
tree search combines row estimates obtained from multiple translations.
CHIRRUP can therefore exploit several redundant constraints on the same
matrix, which helps explain its advantage over RPA on the smaller ensembles
in Fig.~2.\\
\hfill\break
The preceding analysis concerns repeated projected differences $\Delta M b$
between the transmitted matrix and a competing matrix. To test whether
repeated projection measurements can improve RPA when their equivalence
structure is known, we perform a complementary experiment. We compare
ordinary RPA with a kernel-aware version that is given the true null space
$\ker M$. The modified decoder groups projection directions lying in the same
coset of $\ker M$, which produce the same linear component $Mb$, and combines
the resulting duplicate measurements. The projection and recursion steps
remain the same as in ordinary RPA. Both decoders were evaluated on the same
transmitted words and noise realizations in Figure~5.
\begin{figure}[H]
    \centering
    \captionsetup{font=small,skip=2pt}

    \begin{minipage}[t]{0.33\textwidth}
        \centering
        \includegraphics[width=\linewidth]
        {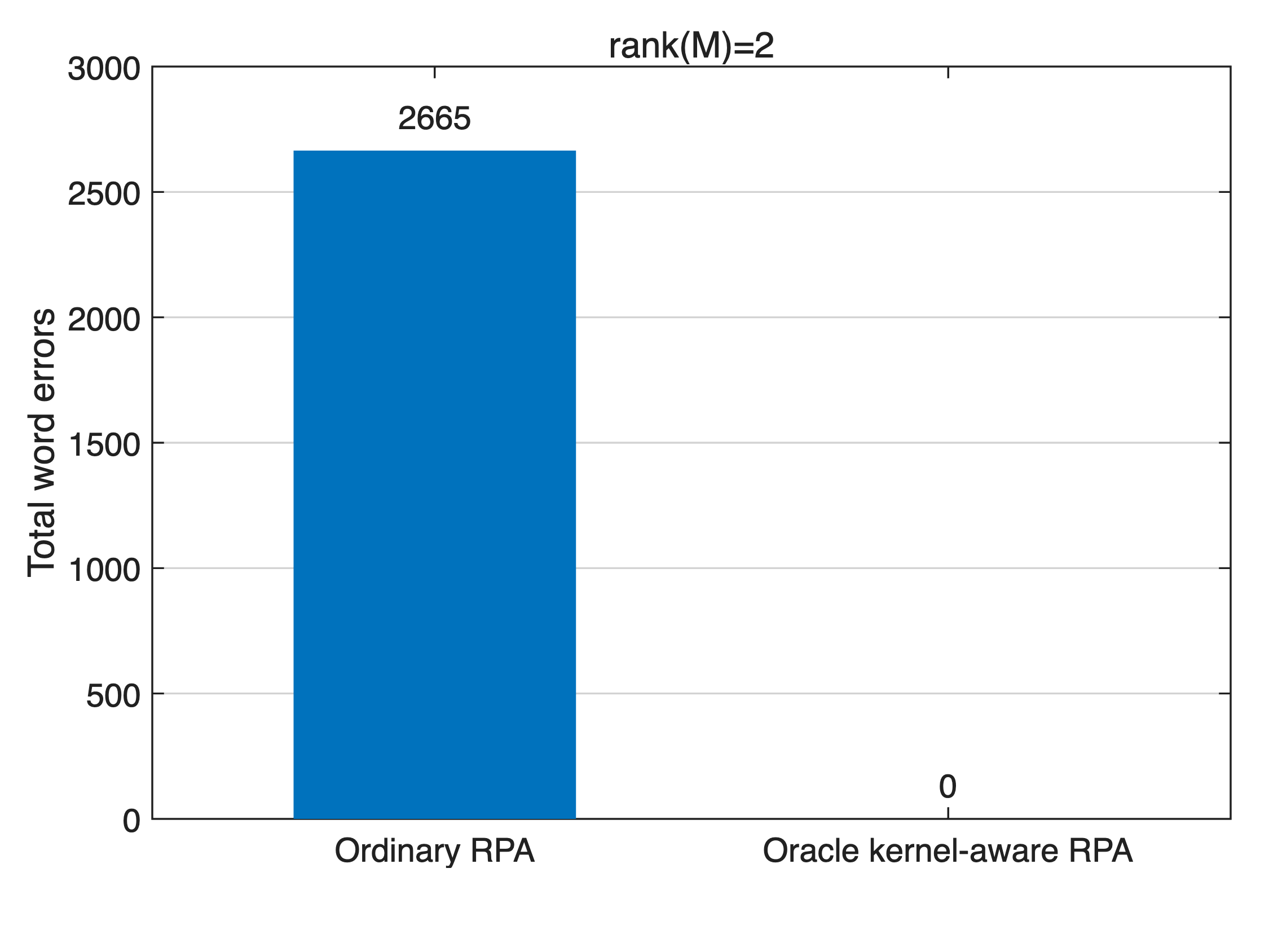}
        \vspace{0.2cm}
        {\scriptsize Fig.~\ref{fig:kernel-aware-rpa}a.
        $\operatorname{rank}(M)=2$}
    \end{minipage}
    \hspace{0.04\textwidth}
    \begin{minipage}[t]{0.33\textwidth}
        \centering
        \includegraphics[width=\linewidth]
        {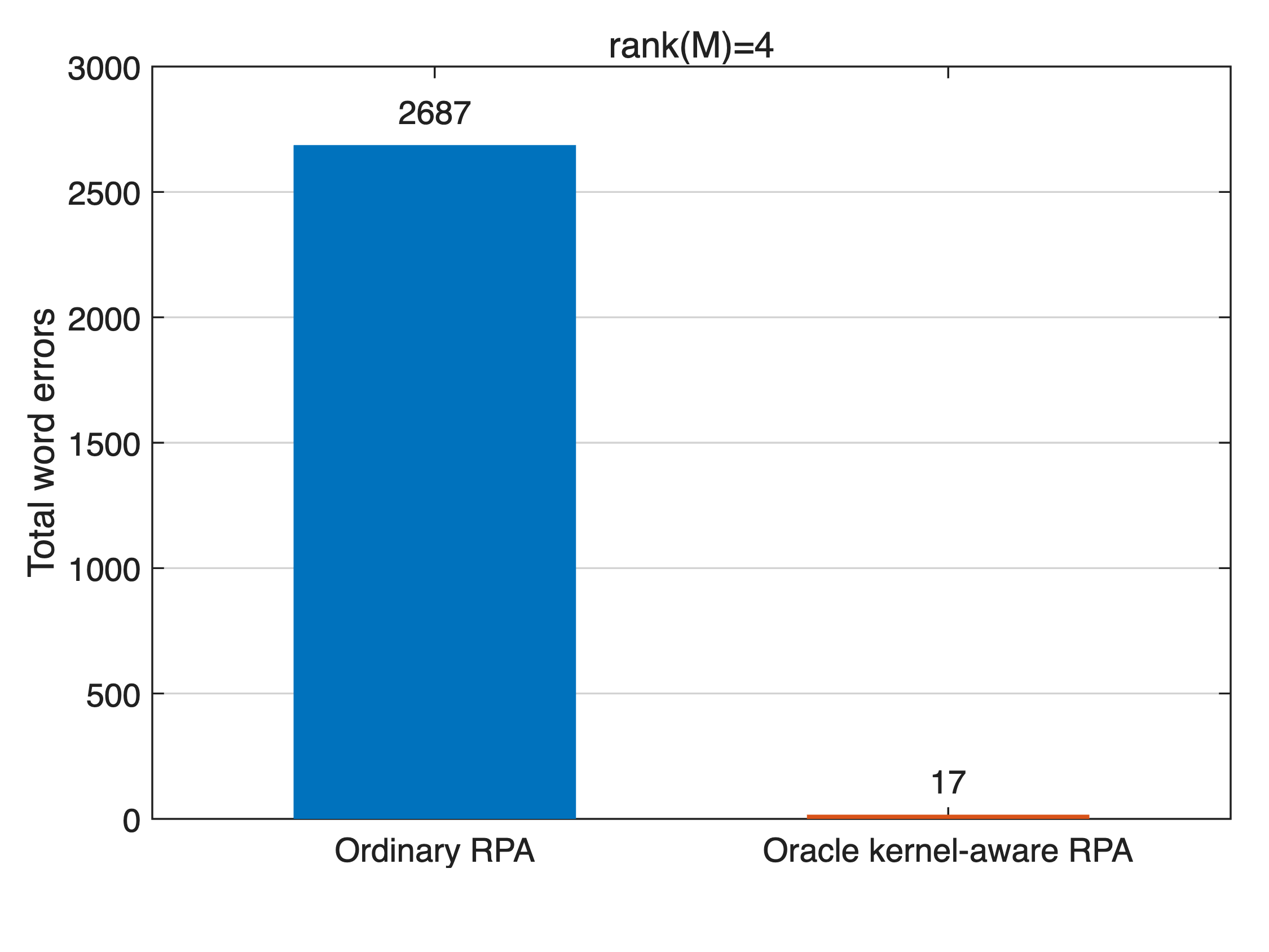}
        \vspace{0.2cm}
        {\scriptsize Fig.~\ref{fig:kernel-aware-rpa}b.
        $\operatorname{rank}(M)=4$}
    \end{minipage}

    \vspace{0.05cm}

    \begin{minipage}[t]{0.33\textwidth}
        \centering
        \includegraphics[width=\linewidth]
        {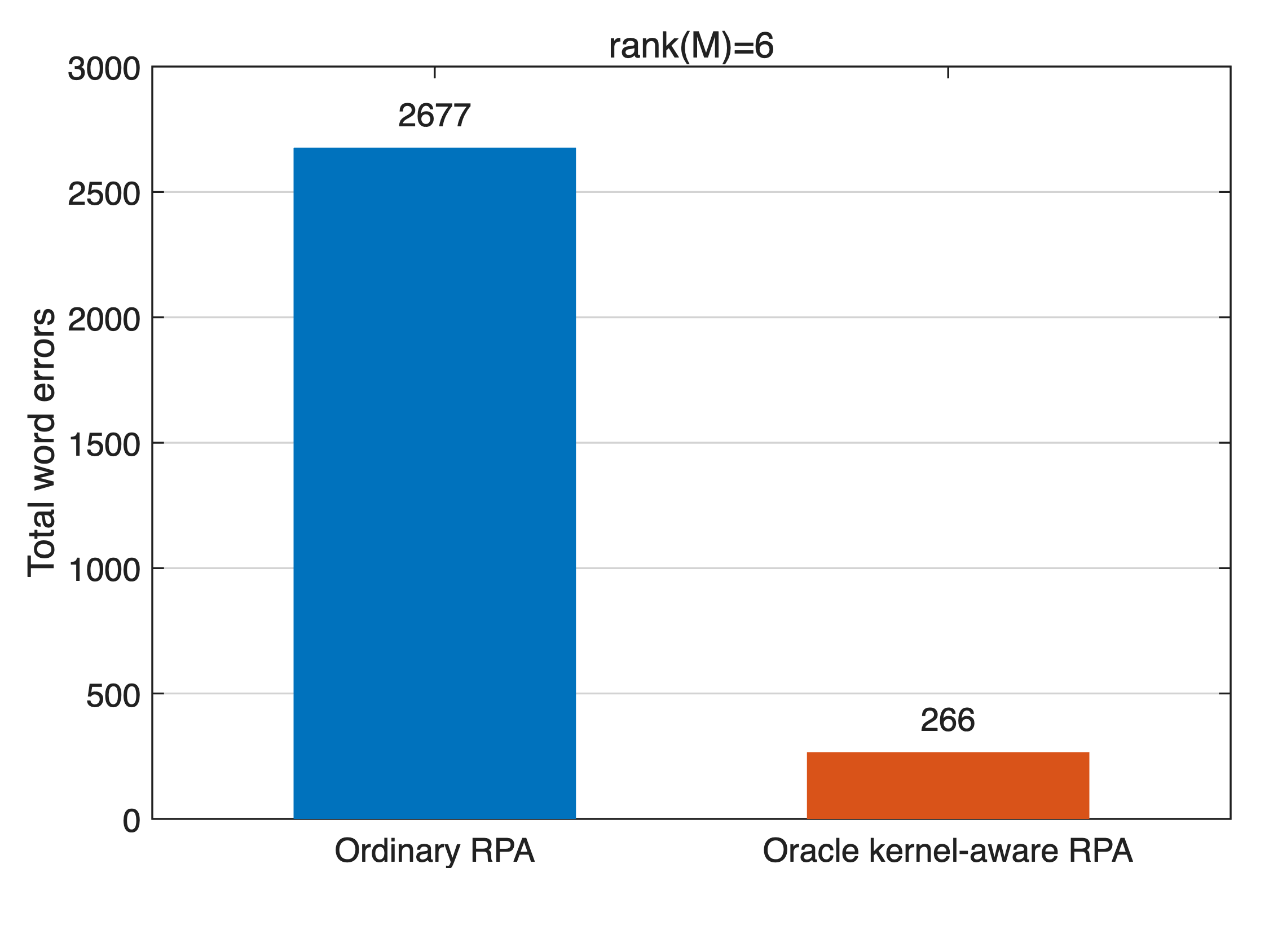}
        \vspace{0.2cm}
        {\scriptsize Fig.~\ref{fig:kernel-aware-rpa}c.
        $\operatorname{rank}(M)=6$}
    \end{minipage}
    \hspace{0.04\textwidth}
    \begin{minipage}[t]{0.33\textwidth}
        \centering
        \includegraphics[width=\linewidth]
        {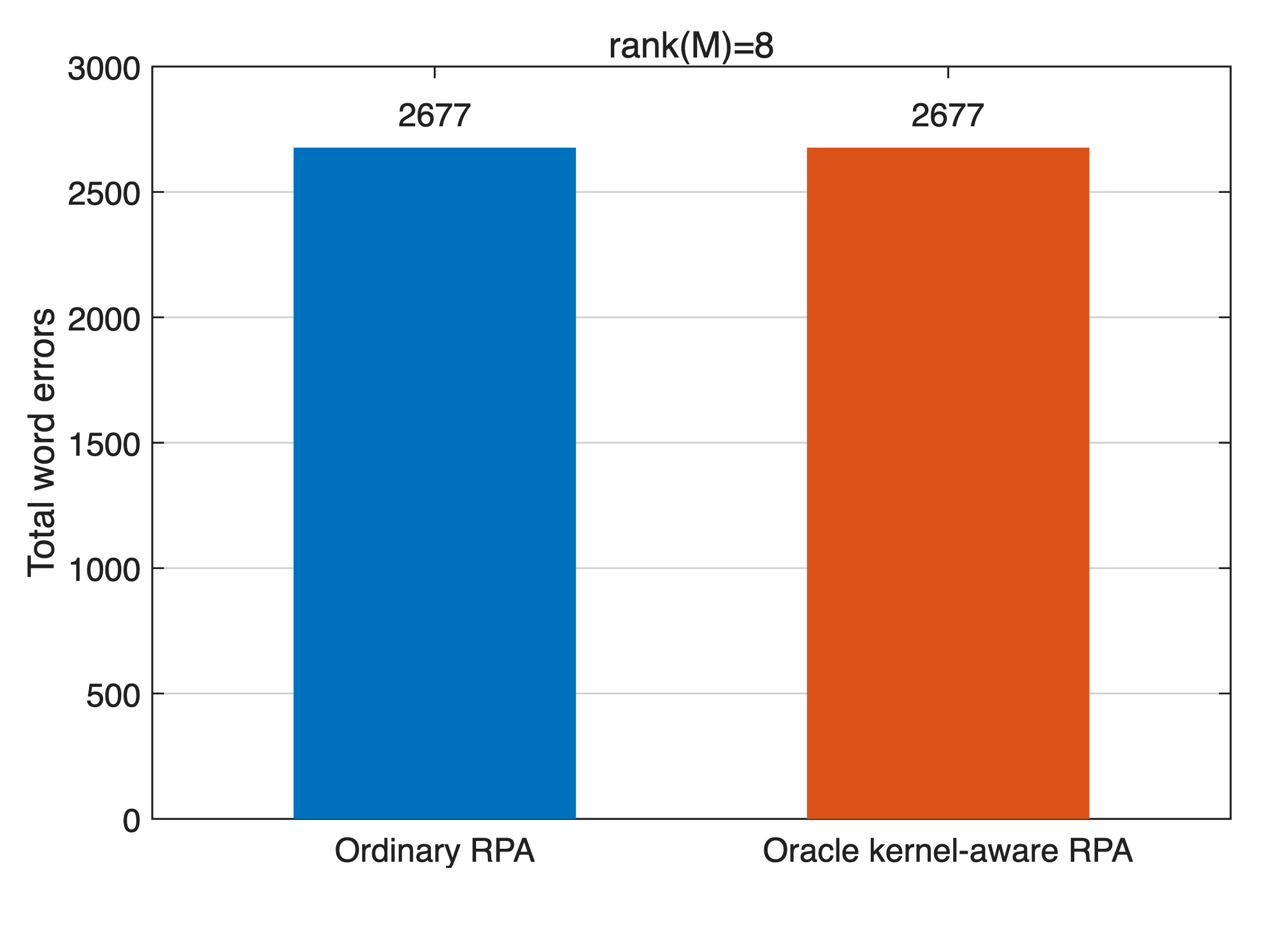}
        \vspace{0.2cm}
        {\scriptsize Fig.~\ref{fig:kernel-aware-rpa}d.
        $\operatorname{rank}(M)=8$}
    \end{minipage}

    \vspace{0.1cm}
    \caption{Comparison of ordinary RPA and kernel-aware RPA.}
    \label{fig:kernel-aware-rpa}
\end{figure}
Figure~\ref{fig:kernel-aware-rpa} shows that the benefit of the modified decoder increases with the number of duplicate measurements. When
$\operatorname{rank}(M)=8$, the null space is trivial and each linear component is measured only once, so the two decoders have identical error counts. As the rank decreases, the number of measurements per linear component grows, and the error count of the modified decoder falls from $266$ to $0$ observed errors; ordinary RPA remains near $2670$ errors.\\
\hfill\break
These results show that duplicate measurements provide valuable information
for aggregation when their equivalence structure is known. They also indicate
that ordinary RPA leaves some of this information unused because it does not
explicitly combine projections that measure the same linear component. Our results also indicate that the projection and recursion stage of CHIRRUP are able to gather this repeated evidence more efficiently. \\

\section{Conclusion}
We have connected the problem of recovering a noisy quadratic form to the problem of recovering a noisy $\Z_4$-linear quadratic form. We have described how Euclidean distance between evaluation vectors is determined by the rank of the difference between the two forms. We have analyzed the performance of the RPA and CHIRRUP algorithms and shown that it depends, not only on geometry, but on the availability of multiple measurements.\\
\hfill\break
We have shown that RPA performance can be improved if the radical of the quadratic form is known. We do not know if determining the radical is easier than determining the quadratic form, and we leave this question to future research.
\newpage
\printbibliography
\newpage
\appendix

\section{Proof of Theorem~\ref{Projection}}
\label{app:rpa-rank}

\begin{proof}
Let $V=\Z_2^{m+1}$, and let $Q_M:V\to \Z_2$ be a binary quadratic form with
skew-symmetric matrix $M$. By definition, for every $x,b\in V$,
\begin{equation}
Q_M(x+b)+Q_M(x)=x^T M b+Q_M(b).
\end{equation}
Therefore, the RPA projection in the direction $b$ is
\begin{equation}
D_bQ_M(x)
=
Q_M(x+b)+Q_M(x)
=
x^T M b+Q_M(b).
\end{equation}
This is a first-order Reed-Muller word. Its linear part is $x^T M b$, and its
constant term is $Q_M(b)$.\\
We also check that this projection is well-defined on the cosets of
$\langle b\rangle$. Replacing $x$ by $x+b$ gives
\begin{equation}
(x+b)^T M b
=
x^T M b+b^T M b.
\end{equation}
Since $M$ is skew-symmetric over $\Z_2$, we have $b^T M b=0$. Hence $x^T M b$
is constant on each coset of $\langle b\rangle$, so $D_bQ_M$ is well-defined on
the quotient space.\\
It remains to count the possible linear parts. The map
\begin{equation}
    b \to Mb
\end{equation}
is linear. Since $\operatorname{rank}(M)=m+1-2r$, its image has dimension $m+1-2r$. By rank-nullity, its kernel has dimension
\[
    (m+1)-(m+1-2r)=2r.
\]
Therefore each element of the image has exactly $2^{2r}$ preimages. In particular, the number of directions satisfying $Mb=0$ is $2^{2r}$, and so
\begin{equation}
    \Pr_b[Mb=0]
    =
    \frac{2^{2r}}{2^{m+1}}
    =
    2^{-(m+1-2r)}.
\end{equation}
This proves Theorem III.1.
\end{proof}
\end{document}